\newcommand{\argmax}{\operatornamewithlimits{argmax}}
\title{ Joint  optimization of TWT  mechanism  and scheduling for IEEE 802.11ax  }
\author[AUTHOR and AUTHOR]{
\textbf{Mehmet  KARACA$^{1}$\thanks{mehmet.karaca@tedu.edu.tr}}\\
Department of Electrical and Electronics Engineering, Engineering Faculty, TED University, Ankara, Turkey, \\ ORCID iD: https://orcid.org/0000-0002-2425-2013\\
\\ [1.8em]

\rec{.201}
\acc{.201}
\finv{..201}
}
\def\E{\ifmmode{\mathbb E}\else{$\mathbb E$}\fi} %natural numbers
\def\N{\ifmmode{\mathbb N}\else{$\mathbb N$}\fi} %natural numbers
\def\R{\ifmmode{\mathbb R}\else{$\mathbb R$}\fi} %real numbers
\def\Q{\ifmmode{\mathbb Q}\else{$\mathbb Q$}\fi} %rational numbers
\def\C{\ifmmode{\mathbb C}\else{$\mathbb C$}\fi} %complex numbers
\def\H{\ifmmode{\mathbb H}\else{$\mathbb H$}\fi} %complex numbers
\def\Z{\ifmmode{\mathbb Z}\else{$\mathbb Z$}\fi} %integers
\def\P{\ifmmode{\mathbb P}\else{$\mathbb P$}\fi} %real numbers
\def\T{\ifmmode{\mathbb T}\else{$\mathbb T$}\fi} %real numbers
\def\SS{\ifmmode{\mathbb S}\else{$\mathbb S$}\fi} %real numbers
\def\DD{\ifmmode{\mathbb D}\else{$\mathbb D$}\fi} %real numbers
\newcommand{\bse}{\begin{subequations}}
\newcommand{\ese}{\end{subequations}}
\newcommand{\ben}{\begin{enumerate}}
\newcommand{\een}{\end{enumerate}}
\newcommand{\bens}{\begin{enumerate*}}
\newcommand{\eens}{\end{enumerate*}}
\newcommand{\be}{\begin{equation}}
\newcommand{\ee}{\end{equation}}
\newcommand{\bea}{\begin{eqnarray}}
\newcommand{\eea}{\end{eqnarray}}
\newcommand{\baa}{\begin{eqnarray*}}
\newcommand{\eaa}{\end{eqnarray*}}
\newcommand{\bc}{\begin{center}}
\newcommand{\ec}{\end{center}}
\newtheorem{theorem}{Theorem}%[section]
\theoremstyle{corollary}
\theoremstyle{lemma}
\newtheorem{lemma}{Lemma}
\theoremstyle{proposition}
\theoremstyle{axiom}
\theoremstyle{conjecture}
\theoremstyle{example}
\theoremstyle{definition}
\theoremstyle{remark}
\begin{document}
\nolinenumbers
\maketitle

\begin{abstract}
IEEE 802.11ax as the newest Wireless Local Area Networks (WLANS) standard brings enormous improvements in network throughput,  coverage and energy efficiency in densely populated areas. Unlike previous IEEE 802.11 standards where power saving mechanisms have a limited capability and flexibility,  802.11ax comes with a different mechanism called Target Wake Time (TWT) where stations (STAs) wake up only   after each  TWT interval and different STAs can wake up  at different time instance  depending on their application requirements. As an example, for a periodic data arrival occurring in  IoT applications, STA can wake up  by following the data period and go to sleep mode for a  much longer time, and STAs with high traffic volume can have shorter TWT interval to wake up more frequency. Moreover, as multi-user transmission capability is added to 802.11ax, multiple STAs can have the same TWT interval and wake up at the same time, and hence there is a great opportunity to have collision-free transmission by scheduling multiple STAs on appreciate TWT intervals to reduce energy consumption and also increase network throughput.  In this paper, we investigate the problem  of STAs scheduling and TWT interval assignment together to reduce overall energy consumption of the network. We propose an algorithm that dynamically selects STAs to be served and assigns them the most suitable TWT interval given their traffic and channel conditions.  We analyze  our algorithm through Lyapunov optimization framework and show that our algorithm is arbitrarily close to the optimal performance at the price of	increased queue sizes.  Simulation results show that our algorithm consumes less power and support higher traffic compared to a benchmark algorithm that operates randomly for TWT assignment. 

	\keywords{IEEE 802.11ax, power saving, TWT mechanism, Lyapunov optimization, random traffic}
\end{abstract}

\section{Introduction}
\label{Int}
IEEE 802.11 based Wireless Local Area Networks (WLANS) has been the most popular way to reach wireless internet for their great flexibility and cost efficiency. Since the last decade, the increasing demand for data-hungry services and  the deployment of Wi-Fi hotspots almost everywhere has triggered the new standardization activities, and the first signal was the need for the development of IEEE 802.11ac~\cite{ref1}  that can provide gigabit rates with the enhancements made in mostly PHY layer. However, the improvements in PHY layer has been reaching  to its limits, and also the  dense network scenarios has become the  one of the main  challenges, as a consequence MAC layer improvements for 802.11 based WLANs has become inevitability. To this end, IEEE 802.11ax ( (High Efficiency WLAN (HEW))  task group~\cite{ref2} has started the new standardization activity, which aims at enhancing PHY and MAC layer in a radical way and is expected to end at 2020.  

The main target of 802.11ax is to increase MAC throughput per station at least four times compared to 802.11ac, while providing spectral efficiency and improved energy saving in densely populated networks.  The  MAC structure of the previous 802.11 WLANs is mostly based on  Carrier Sense Multiple Access with Collision Avoidance  (CSMA/CA() mechanism that is simple but  inefficient especially when the network is dense and the traffic is high. As a result, new MAC structure has started to be developed and Orthogonal frequency division multiple access (OFDMA) has been selected as one of the new access schemes for 802.11ax to improve the performance of dense networks by offering multi-user transmission capability. Also, Multi-user MIMO (MU-MIMO) that enables simultaneous multiple transmission at the same time and frequency resource can be realized for uplink transmission with 802.11ax~\cite{ref3}.

Moreover, in order to improve the energy efficiency of STAs, the IEEE
802.11ax  utilizes a new power saving mechanism which does not exist with the previous WLANs standards but was already introduced in 802.11ah~\cite{ref4, ref5} which  is called  Target
Wake Time (TWT) mechanism.   With TWT mechanism each STA  and Access Point (AP) agree on a common set of parameters that include when STAs wake up (TWT interval) and how long they stays awake, which allows STAs to receive/transmit data at only  the wake-up times, and in the rest of the time  they can stay in sleep mode. There are two types  of TWT
mechanism defined in the standard. With \textit{broadcast} TWT mechanism, the TWT parameters are carried on beacon frames and STAs  need to wake up to receive the beacon in order to know that their TWT information.  

With \textit{individual} TWT,   STAs can have different wake-up period (i.e.,TWT interval) depending on their traffic conditions~\cite{ref6}, which  is  one of the key differences from the previous power saving mechanisms  that STAs can have only a common beacon period as their wake-up interval.  Also, the  difference from 802.11ah is that as multiuser transmission is possible with 802.11ax, multiple STAs can share the same wake-up period and simpler  scheduling scheme among STAs can be developed to allocate TWT intervals to STAs and hence the contention between STAs can be reduced significantly with individual TWT mechanism~\cite{ref6}. We note that the assignment of wake-up period to STAs and the scheduling  are up to the implementation and  not specified in the standard. Another feature of TWT mechanism is that the agreement  can be done for uplink, downlink or both of these transmissions, and also a STA whose TWT parameters have been already agreed can wake-up at any time instance by using the Distributed Coordination Function (DCF). In this paper, we focus on the joint TWT interval assignment and scheduling  to further reduce energy consumption with a careful consideration  of traffic condition of each STA.

Since the early announcement of the new IEEE 802.11 standard,  the attention from both industry and academia to 802.11ax has increased, and the work~\cite{ref7,ref8} explain and investigate the new features coming with 802.11ax but the interest in these studies is not the power saving mechanism.  In~\cite{ref9,ref10}, the authors are interested in reducing power consumption of 802.11ax based networks but by not directly attacking the TWT mechanism. Specifically, the work in~\cite{ref9} tries to save energy by developing a scheduling scheme  for the case when OFDMA is used. Also, a cloud-based centralized management is proposed for the energy saving of a dense network with many 802.11ax based APs in~\cite{ref10}. The "clock drift" problem where the misalignment of the scheduling times of STAs within TWT mechanism occurs and its effect on the power consumption is investigated in~\cite{ref11}. The most relevant studies to our work are found in~\cite{ref6,ref12,ref13}. A detailed explanation and performance assessment  are given in~\cite{ref6}, and it also highlights that a collision-free transmission can be achieved by a proper employment of  a TDMA-type scheduling within TWT mechanism which is left as a open problem. In~\cite{ref12}, the optimization of broadcast TWT mechanism is performed for energy saving and the authors in~\cite{ref13} aims at reducing energy for uplink  multi-user transmission.  The works in~\cite{ref12,ref13} do not attempt to optimize TWT interval of STAs (e.g., wake-up period) and also do not consider a scenario where different STA may have different traffic conditions.

Our contributions are summarized as follows:
%\subsection{Contributions}
\begin{itemize}
	\item We aim at reducing the energy consumption of 802.11ax network where individual TWT is employed as the power saving mechanism. We improve the system model of the individual TWT  by giving more flexibility for the assignment of  TWT interval.

	\item We formulate the problem of joint  multi-user scheduling and TWT interval assignment for energy saving as a stochastic optimization problem by taking into account the traffic condition of each STA. Also, we  propose an algorithm that solves our problem by assigning TWT intervals to each STA dynamically depending on their traffic and channel conditions. We  find upper bounds on the optimal energy consumption and queue  sizes of STAs via Lyapunov optimization framework.

	\item Through simulations, we  demonstrate  the performance of our algorithm by comparing it with a benchmark algorithm that randomly assigns TWT intervals to STAs. We also depict the tradeoff between the average energy consumption and the average queue delay.
\end{itemize}

\section{System Model and Problem Definition}
We consider a fully-connected 802.11ax based WLAN network  where there is an AP serving $M$ users.  We assume random
channel gains between the AP and the STAs that are independent across time and STAs. As there is only a discrete finite set of $F$ Modulation and coding Schemes (MCS)
available in practice, only a fixed set of data rates denoted as $\mathcal{R} = {r_1, r_2, . . . , r_F }$
can be supported.

Recall that with the new sleeping and power saving mechanism introduced in 802.11ax,  each STA and AP agree a common set of TWT parameters (individual TWT), and STAs can only wake up at the specified time to minimize their energy consumption. Next, we explain two  important TWT parameters that are most relevant to this work and also represented in Figure~\ref{fig:Fig1}:

\begin{itemize}
	\item TWT session: the duration of time (in seconds) over that a STA wakes up and receive or transmit data. It  is assumed to be same for all STAs, and denoted as $t_{up}$.
	
	\item TWT interval: the  time interval between two consecutive wake-up time of a STA.
	
	More specifically, STAs wake up periodically to  be served by the AP. We consider individual TWT with which  different STAs can have different periodicity depending on their traffic and application requirements. Also,  unlike previous studies, we consider that this periodicity is not static but can be changed dynamically on the agreement of STAs and the AP. Specifically, at the beginning of  a \textit{epoch} that starts at $t =nT$ where $n = 1,2,....$, \textit{TWT Wake-up interval}  of a STA $m$ denoted as $T_m^t$ is determined. Then, the STA $m$ wakes up at every $T_m^t$  until the next epoch. Each 	epoch lasts  $T$ seconds,  where $T > T_m^t$ for all $t$ and $m$. At the beginning of the next epoch, $T_m^t$ is updated if necessary. 
	
	For instance, as shown in Figure~\ref{fig:Fig1}, the TWT wake-up interval for  STA $m$ at the beginning of epoch $n$ (i.e., at time $t=nT$)  is $T_m^t$. However, at the next epoch, $t+1=(n+1)T$ it is set to $T_m^{t+1}  > T_m^t $, and STA $m$ has less number of TWT sessions and more time to sleep. To depict the integration of the multi-user capability in TWT, we show in Figure~\ref{fig:Fig1} that STA $k$ and  STA $l$ have the same TWT interval at time $t$ and wake up at the same time.
\end{itemize} 
\begin{figure}[h]
	\centering
	%\begin{left}
	%     \includegraphics[width=0.5\columnwidth]{fig_regions.eps}
	\includegraphics[width=1\columnwidth]{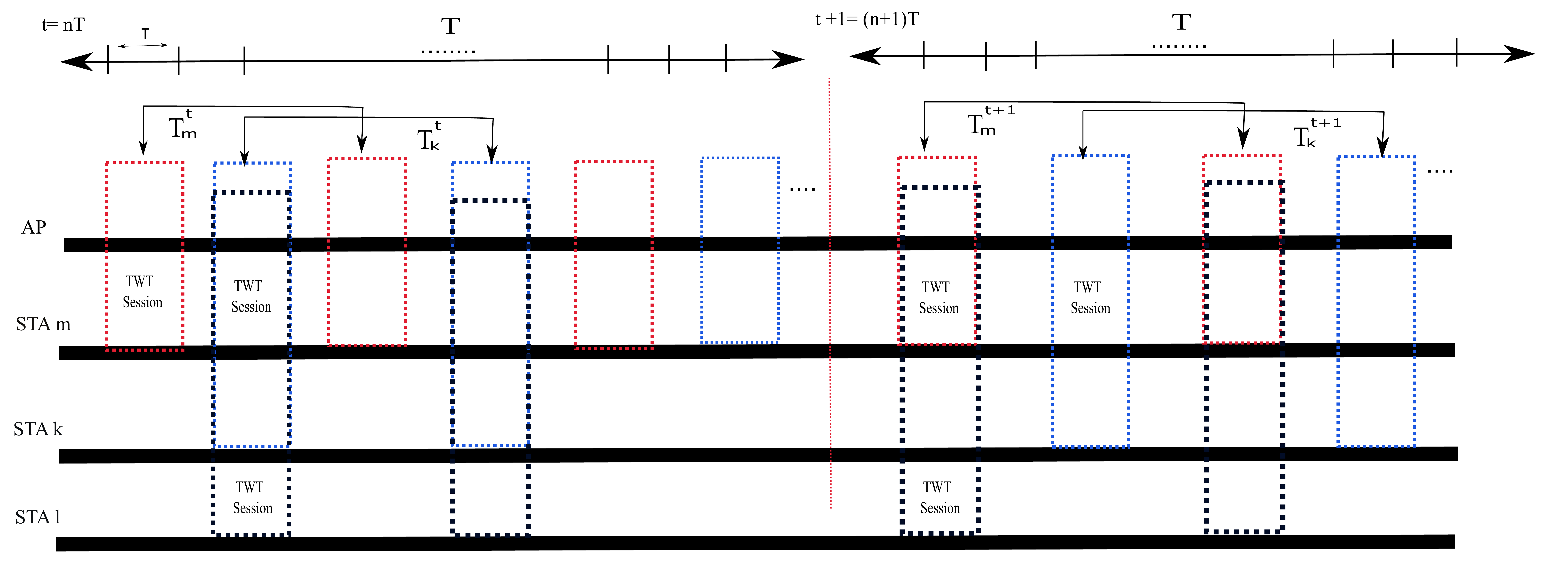}
	%\end{left}
	\caption{TWT with dynamic wake up time. }
	\label{fig:Fig1}
\end{figure}
The TWT wake-up interval can take values from a predefined discrete set, $T_m^t \in \{T_1, T_2, ..., T_L\}$ for all $m$ and $t$.  For a given $T_m^t$, the number of TWT sessions that STA $m$ can utilize at time $t=nT$ (over epoch $n$) is given as:
\begin{align*}
N_m^t = \left \lfloor \frac{T}{T_m^t} \right \rfloor,
\end{align*}
where $ \left \lfloor x \right \rfloor$ is the largest integer that is smaller than $x$. Since $T_m^t$ takes only discrete values from a finite set, $T$ as a system parameter and the set $\{T_1, T_2, ..., T_L\}$ can be chosen such that  $N_m^t =  \frac{T}{T_m^t} $\footnote{ Another parameter of TWT mechanism is the first time for a STA to start TWT session. Here, for analytical simplicity we assume TWT interval can fully represent the number of TWT sessions.}.  We further assume that each $T$ seconds is divided into mini-slots denoted as  $\tau$, $\tau\in \{0,1,2,\ldots\}$, and $\tau$ is multiple integer of all possible values of $T_m^t$. For simplicity, we assume that $\tau$ is equal to $t_{up}$.

We now determine the energy consumption for STA $m$ at a mini time slot $\tau$. Note that when STA $m$ wakes up, it becomes awake for a duration of $t_{up}$ seconds. $t_d$ and $t_u$ are the fractions of $t_{up}$ that are used for downlink and uplink transmission, respectively. Also, the power consumption during downlink and  uplink are denoted as $P_d$ and $P_u$, respectively. Then, the total energy consumption during a single TWT session is given by,
\begin{align*}
 E_s= P_d \times t_dt_{up} + P_u \times t_ut_{up}.
\end{align*}
Note that when $\tau = i T_m^t$, where $i>0$ is an integer,  STA $m$ wakes up to start a TWT session. When $\tau \neq i T_m^t$,  STA $m$ is at sleep mode. The energy consumption of STA $m$ at a mini slot $\tau$ is given as,
\begin{align}
E_m(\tau) =& \left\{ \begin{array}{l l}
E_s             & \text{; if  $\tau=iT_m^t$}\\
E_{sleep}     & \text{; otherwise},
\end{array} \label{eq:r2}
\right.
\end{align}
where $E_{sleep}$ is the energy consumed at sleep mode during of  $\tau$ seconds and $E_{sleep} = P_{sleep} \times \tau $, where  $P_{sleep}$ is the power consumption at sleep mode.

During the epoch $n$ starting at $t=nT$, there are $N_m^t$ number of TWT sessions for STA $m$. Thus, the STA $m$ consumes $E_m^t$ amount of energy over the epoch $n$, where 
\begin{align}
\sum_{\tau=1}^{T/\tau} E_m(\tau) = E_m^t= E_s \times N_m^t + \left(\frac{T}{\tau}-N_m^t\right)E_{sleep}, \label{eq:Emt}
\end{align}
where $\left(\frac{T}{\tau}-N_m^t\right)E_{sleep}$ accounts for the total energy consumption of STA $m$ at sleep mode since there are $\left(\frac{T}{\tau}-N_m^t\right)$ mini-slots for sleeping over the epoch.
Then, the time average energy consumption over all STAs is determined as follows:
\begin{align}
E_{avg} = \limsup_{t\rightarrow \infty}\frac{1}{t}
\sum_{t=0}^{t-1} \sum_{m=1}^{M}\mathbb{E} [E_m^t] \label{eq:power}
\end{align}

At each mini-slot $\tau$, data arrives to the queue of each users. Let $A_m(\tau)$ be the amount of data (i.e., in bits) arriving into
the queue of STA $m$ at time slot $\tau$. We assume that $A_m(\tau)$ is a
stationary process and it is independent across STAs and time
slots. We denote the arrival rate vector as $\boldsymbol
\lambda=(\lambda_1,\lambda_2,\cdots,\lambda_M)$, where $\lambda_m =
{\mathbb E}[A_m(\tau)]$. Let $\boldsymbol
Q(\tau)=(Q_1(\tau),Q_2(\tau),\cdots,Q_M(\tau))$ denote the vector of queue
sizes, where $Q_m(\tau)$ is  the queue length of user $m$ at time slot
$\tau$. A queue is strongly stable if 
\begin{equation}
\limsup_{S\rightarrow \infty}\frac{1}{S} \sum_{\tau=0}^{S-1}\mathbb{E}(Q_m(\tau)) < \infty. 
\end{equation}
Moreover, if every queue in the network is stable then the network
is called stable. The dynamics of the queue of STA $m$ is
\begin{equation}
Q_m(\tau)=\max[Q_m(\tau)-R_{m}(\tau) , 0] + A_m(\tau)  \label{eq:queuelength}
\end{equation}
We note that the transmission rate of $R_m(\tau)$ can take positive value, i.e., $R_m(\tau) > 0$ at time instances $\tau = iT_m^t$ as STA $m$  wakes up only at these instances. When $\tau \neq  iT_m^t$,  then $R_m(\tau) = 0$. $\Lambda$ denotes the capacity region of the network, which is the largest possible set of rates $\boldsymbol \lambda$ that can be supported by a joint scheduling and TWT interval assignment algorithm with ensuring the network stability, $\boldsymbol \lambda \in \Lambda$. 

As the multi-user transmission capability of 802.11ax is enhanced with the introduction of OFDMA and MU-MIMO in uplink, more than one STA can be assigned to a TWT interval. We assume that at most $K$ STAs can share the same TWT interval, and let $N_{T_l^t}$  be the number of STAs that are assigned to TWT interval $T_l$ at time $t$, $l \in \{1,2,...,L\}$. Therefore, in total up to $L \times K$ STAs can have a TWT interval assigned over each epoch. With our setup, the control decision of the network for STA $m$ is the determination of $T_m^t$. That is to say  at the beginning of each time $t$ (i.e., epoch), the network determines on the TWT wake-up interval for STA $m$. Then,  we consider the following optimization problem:
\begin{align}
&\min \quad E_{avg} \\
\text{s.t.:1)}\  &\text{Netwrok stability}\\
\text{2)}\ & T_m^t \in \{T_1, T_2, ..., T_L\} \quad \forall t, m\\
\text{3)}\ & N_{T_l^t} \leq K \quad \forall l \in \{1,2,...,L\}, \forall t
\label{eq:problem}
\end{align}
The constrain in (8) ensures that TWT interval decided for a STA takes values from a predefined discrete set. Also, the third constraint in (9) guarantees that at most $K$ STAs can be assigned to a TWT interval. At the beginning of each epoch, the system determines TWT wake-up interval for $L \times K$ number of STAs and other STAs can go to sleep mode during whole $T$ seconds to save  energy. 

The problem (6)-(7)-(8)-(9) aims at minimizing the total average energy consumption of STAs by properly choosing TWT wake-up interval with the consideration of their traffic conditions. Intuitively, it is expected that the solution of the problem must strike a balance between the energy consumption and queue size. In other words, when the channel condition is good and queue size is large it can reduce the TWT interval, and when  queue size is low it prefers a higher TWT interval for more energy saving. The problem constitutes a stochastic optimization problem and we next propose a solution based on Lyapunov optimization technique~\cite{ref14}.

\section{Joint TWT Interval and Scheduling Algorithm (JTWSA)}
In this work, we employ Lyapunov drift and optimization tools~\cite{ref14} that enables us to  deal with performance optimization (i.e., energy minimization in our case) and network stability problems together in a unified framework.  In order to use this tool, first, a Lyapunov function is  determined, and it is common in literature to choose a square function of queue sizes as the Lyapunov function:
	\begin{align*}
     L(\mathbf{Q}(t))\triangleq \frac{1}{2} \sum_{m=1}^M Q_m^2(t)
	\end{align*}
Lyapunov function  acts as a measurement tool for the total queue sizes in the network. Another important concept in this framework is Lyapnov drift that measures how much the expected values of the queue sizes vary within a one single slot. Here, we use  the conditional T-slot Lyapunov drift that is the expected variation in the Lyapunov function over $T$ slots and  given as:
\begin{equation}
\Delta_T(t)\triangleq \mathbb{E}
\left\{L(\mathbf{Q}(t+T))-L(\mathbf{Q}(t))|\mathbf{Q}(t) \right\}  \label{eq:drift}.
\end{equation}
To stabilize the network,  we need to make sure that the drift does not grow indefinitely and hence ultimate aim is to minimize the drift to obtain a stable network. For a more tractable analysis, an upper bound for the drift is determined and then this bound is minimized. 

In order to handle both the network stability and an objective optimization together, drift-plus penalty method~\cite{ref14} is employed. With this method,  our penalty (or cost) $V\sum_{m=1}^{M}\mathbb{E} [E_m^t | (\mathbf{Q}(t)]$ (i.e., $T$ slot drift-plus-penalty) is added to the drift~\eqref{eq:drift}
\begin{equation}
\Delta_T(t) + V\sum_{m=1}^{M}\mathbb{E} [E_m^t |  \mathbf{Q}(t)]  \label{eq:driftpluspen},
\end{equation}
where $V$ is a control parameter between the queue delay and the average system penalty. Now, in the Lyapunov optimization framework, we turn our energy minimization and network stability problem to be the minimization of \eqref{eq:driftpluspen}. Instead of directly attacking to minimize~\eqref{eq:driftpluspen},  in the following Lemma we first derive an upper bound on \eqref{eq:driftpluspen} and then we propose our algorithm that minimizes the bound. 
\begin{lemma}
	\label{lemma:1}
	Given $V >1$, and at time $t=nT$,  for any feasible control decision $(T_1^t,T_2^t,...,T_M^t $), we have the following bound,
	\begin{align}
	\Delta_T(t)  + V\sum_{m=1}^{M}\mathbb{E} [E_m^t |  \mathbf{Q}(t)]
	&\leq   B_1  + \mathbb{E} \left[\sum_{\tau = t}^{t +T-1} \sum_{m=1}^{M} Q_m(\tau)A_m(\tau)   | \mathbf{Q}(t)\right] \nonumber\\
	& - \mathbb{E} \left[\sum_{\tau = t}^{t +T-1} \sum_{m=1}^{M} Q_m(\tau) R_m(\tau) - V\sum_{m=1}^{M} E_m^t   | \mathbf{Q}(t) \right]\label{eq:bound2}
	\end{align}
	where $B_1= \frac{ MT (R_{max}^2 + A_{max}^2)}{2}$.
\end{lemma}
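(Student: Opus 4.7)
The plan is to follow the standard Lyapunov drift-plus-penalty derivation from Neely's framework~\cite{ref14}, adapted here to the $T$-slot drift because the control decision $T_m^t$ is held fixed across each epoch of length $T$. First I would start from the per-slot queue recursion~\eqref{eq:queuelength} and apply the classical squaring inequality: using the fact that $(\max[Q-R,0])^2 \leq (Q-R)^2$ and $\max[Q-R,0] \leq Q$, I would obtain the single-slot bound
\begin{equation*}
Q_m(\tau+1)^2 \leq Q_m(\tau)^2 + R_m(\tau)^2 + A_m(\tau)^2 - 2 Q_m(\tau)\bigl(R_m(\tau) - A_m(\tau)\bigr).
\end{equation*}
Dividing by $2$ and summing over $m$ then yields a one-slot Lyapunov drift bound
\begin{equation*}
L(\mathbf{Q}(\tau+1)) - L(\mathbf{Q}(\tau)) \leq \tfrac{1}{2}\sum_{m=1}^M \bigl(R_m(\tau)^2 + A_m(\tau)^2\bigr) - \sum_{m=1}^M Q_m(\tau)\bigl(R_m(\tau) - A_m(\tau)\bigr).
\end{equation*}

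Next I would telescope this inequality over the $T$ mini-slots of the epoch by summing from $\tau = t$ to $\tau = t+T-1$, which collapses the left-hand side to $L(\mathbf{Q}(t+T)) - L(\mathbf{Q}(t))$. Using the a-priori bounds $R_m(\tau) \leq R_{\max}$ and $A_m(\tau) \leq A_{\max}$, the quadratic terms are controlled by $\tfrac{1}{2} MT (R_{\max}^2 + A_{\max}^2) = B_1$, leaving the drift upper bound
\begin{equation*}
L(\mathbf{Q}(t+T)) - L(\mathbf{Q}(t)) \leq B_1 - \sum_{\tau=t}^{t+T-1} \sum_{m=1}^M Q_m(\tau)\bigl(R_m(\tau) - A_m(\tau)\bigr).
\end{equation*}

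Taking the conditional expectation given $\mathbf{Q}(t)$ converts the left side into $\Delta_T(t)$ per definition~\eqref{eq:drift}. To finish, I would add the penalty term $V \sum_{m=1}^M \mathbb{E}[E_m^t \mid \mathbf{Q}(t)]$ to both sides and split the arrival and service terms on the right: the arrival contribution $\mathbb{E}[\sum_{\tau,m} Q_m(\tau) A_m(\tau) \mid \mathbf{Q}(t)]$ stays positive, while the service contribution together with the penalty groups into $\mathbb{E}[\sum_{\tau,m} Q_m(\tau) R_m(\tau) - V \sum_m E_m^t \mid \mathbf{Q}(t)]$ subtracted from the bound, reproducing~\eqref{eq:bound2}.

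The step I expect to require the most care is the telescoping and bookkeeping of indices over the epoch: one must be consistent about the fact that $E_m^t$ is already an epoch-aggregated quantity (per~\eqref{eq:Emt}, it is a sum of per-mini-slot energies over $T/\tau$ slots), while the queue terms are indexed per mini-slot $\tau$. The assumption that $\tau = t_{up}$ and that $T$ is an integer multiple of every admissible $T_m^t$ guarantees the telescoping sum is clean and no rounding losses appear in $B_1$. The hypothesis $V > 1$ is not actually used to derive the bound itself (it will be used downstream when controlling the opportunistically chosen decision), so I would simply carry it forward as an assumption without invoking it in the present argument.
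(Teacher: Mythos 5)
Your proposal is correct and follows essentially the same route as the paper's proof: the same single-slot squaring bound derived from the queue recursion, telescoped over the epoch $[t,t+T-1]$, bounded via $R_{\max}$ and $A_{\max}$ to produce $B_1$, followed by conditional expectation and addition of the penalty term $V\sum_{m}\mathbb{E}[E_m^t\mid\mathbf{Q}(t)]$. Your observation that the hypothesis $V>1$ plays no role in this particular bound is also consistent with the paper, which likewise never invokes it in the proof.
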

\begin{proof}
	\label{sec:lemma1}
	The proof starts with finding an upper bound for the Lyapunov drift given in \eqref{eq:drift} by using the following fact: for user $m$, the following inequality holds.
	\begin{align}
	Q_m^2(t+1) -   Q_m^2(t)  \leq   R_m^2(t) + A_m^2(t) + 2Q_m(t) \left [A_m(t) -  R_m(t)\right]
	\end{align}
	By summing (13) over   $[t,t+T-1]$ and knowing that $R_m(t) \leq R_{max}$ and $A_m(t) \leq A_{max}$  for all $m$ and $t$  we obtain,
	\begin{align}
	Q_m^2(t+T) -   Q_m^2(t)   \leq   TR^2_{max} + TA^2_{max} 
	 - 2 \left [ \sum_{\tau = t}^{t +T-1}Q_m(\tau)[R_m(\tau) - A_m(\tau)]\right]
	\end{align}
	Then,  by taking  the conditional expectation of (14) with respect to  $\mathbf{Q}(t)$ and summing over all users, and dividing by  1/2 we have,
	\begin{align}
	\Delta_T(t)  \leq   B_1  - \mathbb{E} \left[\sum_{\tau = t}^{t +T-1} \sum_{m=1}^{M} Q_m(\tau)[R_m(\tau) - A_m(\tau) ]   | \mathbf{Q}(t)\right] 
	\end{align}
	Finally, we add the penalty term $V\sum_{m=1}^{M}\mathbb{E} [E_m^t |  \mathbf{Q}(t)]$ to both sides of (14) and rearranging the resulting terms, we have the bound in~\eqref{eq:bound2}. This completes the proof.
\end{proof}
The Lyapunov optimization framework now helps to  solve our primary problem by  finding an algorithm that minimizes the right hand side (R.H.S) of the bound~\eqref{eq:bound2}.  Clearly, the solution of following problem minimizes the R.H.S of~\eqref{eq:bound2}.

\textbf{Problem 1:}
\begin{equation*}
\max_{T_1^t.T_2^t,...,T_M^t}  \mathbb{E} \left[\sum_{\tau = t}^{t +T-1} \sum_{m=1}^{M} Q_m(\tau)[R_m(\tau) -A_m(\tau)]-  V\sum_{m=1}^{M} E_m^t  | \mathbf{Q}(t) \right]\label{eq:Alg1}
\end{equation*}

However, the solution of Problem 1 requires a prior knowledge of queue size and transmission rates of STAs, i.e., future information. Particularly, obtaining the future traffic information is not easy so we follow the same idea proposed in~\cite{ref15} to find a looser but more relaxed bound without needing the future queue size information. With this idea, we approximate  the future queue sizes as the current observation, i.e., $Q_m(\tau) = Q_n(t)$ for all $\tau \in [t, t+T-1]$ for all STA $m$, and now Problem 1.1. aims at maximizing the following problem:

\textbf{Problem 1.1:}
\begin{equation*}
\max_{T_1^t.T_2^t,...,T_M^t}  \mathbb{E} \left[\sum_{\tau = t}^{t +T-1} \sum_{m=1}^{M} Q_m(t)R_m(\tau) -  V\sum_{m=1}^{M} E_m^t  | \mathbf{Q(t)} \right]\label{eq:Alg2}
\end{equation*}

Now, the solution of Problem 1.1  needs only queue size information at the beginning of $t=nT$ but it sill needs future transmission rates over $T$ seconds. Fortunately, for WLANs, the coherence time of the channel is long enough since STAs are usually stationary and the mobility is low. Also, transmission rates can be estimated with the advanced machine learning techniques~\cite{ref16}. We assume that the transmission rates change every $T$ seconds, hence the Problem 1.1 is reduced to the following deterministic optimization problem as follow:

\textbf{Problem 1.2:}
\begin{align}
\max_{T_1^t.T_2^t,...,T_M^t}   \sum_{m=1}^{M} Q_m(t)R_m(t)N_m^t -  V E_m^t  \label{eq:Alg31}
\end{align}
By using~\eqref{eq:Emt}, we can rewrite~\eqref{eq:Alg31} as,

\begin{align}
\max_{T_1^t.T_2^t,...,T_M^t}   \sum_{m=1}^{M} Q_m(t)R_m(t)N_m^t -  V N_m^t (E_s - E_{sleep}) \label{eq:Alg32}
\end{align}

The following Joint TWT Assignment and Scheduling Algorithm (JTWSA) solves~\eqref{eq:Alg32}.

\textbf{ JTWSA :}
\begin{itemize}
	\item Input: $V$, $P_d$, $P_u$, $P_{sleep}$, $t_{up}$. At every $t=nT$ do:
	\item Step 1: Among all users, find 
		\begin{align*}
		\mathcal{S}(t) = \argmax-(L\times K) \left\{  Q_m(t)R_m(t) - V(E_s-E_{sleep}) \right\}, 
		\end{align*}
		where $ \argmax-(L\times K) $ is  the operation of choosing the first $L \times K$ elements of a set of real numbers sorted in decreasing order.

	\item Step 2:  
	
	- Start assigning TWT intervals $T_1,T_2,..,T_L$ to the STAs in $\mathcal{S}(t)$  if only $Q_m(t)R_m(t) > V(E_s-E_{sleep})$. For instance, the TWT interval of the first $K$ STAs in  $\mathcal{S}(t)$ is set to $T_1$, for the second group of $K$ STAs in  $\mathcal{S}(t)$, it is set to $T_2$ so on. If the number of STAs is assigned to TWT interval is $K$, go to the next TWT interval.
	
	 - If $Q_m(t)R_m(t) \leq V(E_s-E_{sleep})$, set the TWT interval for that STA $m$ to $T$.

	\item Step 3: After determining $T_m^t$ for every STA,  perform transmit/receive operation and updates queue until next $t=(n+1)T$. 
	       
\end{itemize}

As JTWSA optimally minimizes the R.H.S of (12),  we have the following Theorem that shows performance of JTWSA.
\begin{theorem}(Lyapunov Optimization)
	\label{thm:1}
	Suppose $\boldsymbol \lambda$ is an interior point in $\Lambda$, then there
	exits a constant $\epsilon > 0$ such that $\boldsymbol \lambda + \boldsymbol \epsilon \in
	\Lambda$. Then, under JTWSA, we have
	the following bounds:
	\begin{align}
	\limsup_{T'\rightarrow \infty} \frac{1}{T'} \sum_{t=0}^{T'-1}
	\mathbb{E} [ E_m^t ] \leq E_{avg}^* +  \frac{B_2}{V}
	\label{eq:thr1}
	\end{align}
	and
	\begin{align}
	\limsup_{N\rightarrow \infty} \frac{1}{N} \sum_{n=0}^{N-1}
	\sum_{m=1}^{M} \mathbb{E} [Q_m(nT) ]\leq  \frac{B_2 + VE_{max}}{\epsilon}
	\label{eq:thr1}
	\end{align}
	Where $E_{avg}^*$ is the optimal solution of problem (6)-(7)-(8)-(9) and where $B_2= \frac{ MT^2 (R_{max}^2 + A_{max}^2)}{2}$ and $E_{max} = MTE_s $.
\end{theorem}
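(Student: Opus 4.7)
The plan is to follow the standard drift-plus-penalty machinery of~\cite{ref14}, building on Lemma~\ref{lemma:1}. First I would close the gap between the bound in Lemma~\ref{lemma:1} (whose right-hand side involves future quantities $Q_m(\tau)$ and $R_m(\tau)$ over the epoch $[t,t+T-1]$) and the deterministic objective of Problem~1.2 that JTWSA actually maximizes. Using the queue dynamics $|Q_m(\tau) - Q_m(t)| \leq T\max(A_{\max}, R_{\max})$ and the assumption that transmission rates are (approximately) constant across an epoch, replacing $Q_m(\tau)$ by $Q_m(t)$ and $R_m(\tau)$ by $R_m(t)$ in the bound of Lemma~\ref{lemma:1} contributes an additive error of order $MT^{2}(R_{\max}^{2}+A_{\max}^{2})$, which can be folded into $B_1$ to produce the enlarged constant $B_2 = MT^{2}(R_{\max}^{2}+A_{\max}^{2})/2$. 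After this modification, JTWSA minimizes the right-hand side of the resulting one-epoch bound over every policy that respects constraints (8)--(9).

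Second, I would invoke the existence of an optimal stationary randomized benchmark. Because $\boldsymbol{\lambda}$ is strictly interior to $\Lambda$, a standard Carath\'eodory-type argument on the closure of the convex hull of achievable long-run service-rate vectors yields a stationary randomized policy $\pi^{*}$ whose TWT assignments depend only on the channel/arrival statistics and satisfy $\mathbb{E}_{\pi^{*}}[R_m(\tau) - A_m(\tau)] \geq \epsilon$ for every $m$, together with $\sum_{m}\mathbb{E}_{\pi^{*}}[E_m^{t}] = E_{\text{avg}}^{*}$. Since JTWSA minimizes the (modified) right-hand side over feasible policies, substituting $\pi^{*}$ provides the key one-epoch inequality
\begin{align*}
\Delta_T(t) + V\sum_{m=1}^{M}\mathbb{E}\!\left[E_m^{t} \mid \mathbf{Q}(t)\right] \;\leq\; B_2 + V E_{\text{avg}}^{*} - \epsilon T \sum_{m=1}^{M} Q_m(t).
\end{align*}
For the energy bound I would drop the non-positive drift-reducing term $-\epsilon T \sum_m Q_m(t)$; for the queue bound I would instead keep it and use the crude estimate $\sum_m E_m^{t} \leq E_{\max} = MTE_s$ to dispose of the energy contribution.

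Third, I would telescope the one-epoch inequality over $n = 0, 1, \ldots, N-1$ epochs ($T' = NT$). Taking total expectations, $\sum_{n}\mathbb{E}[\Delta_T(nT)]$ collapses to $\mathbb{E}[L(\mathbf{Q}(NT))] - L(\mathbf{Q}(0)) \geq -L(\mathbf{Q}(0))$. Rearranging and dividing by $NV$ for the energy statement, or by $N\epsilon$ for the queue statement, and then letting $N\to\infty$, removes the transient $L(\mathbf{Q}(0))/N$ contribution and gives the two stated bounds $E_{\text{avg}}^{*} + B_2/V$ and $(B_2 + V E_{\max})/\epsilon$.

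The main obstacle I anticipate is the first step: carefully tracking the approximation error introduced by the cascaded reductions Problem~1 $\to$ Problem~1.1 $\to$ Problem~1.2 and verifying that every discarded term can be absorbed into the single enlarged constant $B_2$. A secondary technical point is justifying the existence of $\pi^{*}$ within the discrete feasible set defined by (8)--(9); this follows from the convexity of $\Lambda$ together with a time-sharing argument over the finitely many allowed TWT assignments. Once both are in place, the remainder is the routine mechanical application of the Lyapunov drift-plus-penalty theorem.
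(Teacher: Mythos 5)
Your proposal is correct and follows essentially the same route as the paper, which itself only sketches the argument by deferring to Theorem 5.4 of~\cite{ref14} and Theorem 2 of~\cite{ref15}: comparison of JTWSA against an optimal stationary randomized policy on the drift-plus-penalty bound, followed by telescoping over epochs. Your additional bookkeeping of the Problem~1 $\to$ 1.1 $\to$ 1.2 approximation error being absorbed into $B_2$ supplies detail the paper omits but is consistent with its stated constants.
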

\begin{proof}
	To avoid redundancy with existing literature, we omit the details here and refer the readers to the proof in Theorem 5.4 of~\cite{ref14} and Theorem 2 in~\cite{ref15}. The sketch of the proof is as follows: first, it shows the existence of  a stationary randomized algorithm   that is optimal and achieves the minimum time average energy by choosing $T_m^t$ independent from $\mathbf{Q}(t)$ but according to a fixed probability distribution known to the network. Then, it is shown that JTWSA is better than this randomized algorithm in minimizing the R.H.S of (12) and thus it is also optimal. 
\end{proof}
Theorem 1 implies that the average energy under JTWSA approaches to  the minimum optimal energy $E_{avg}^*$ as $V$  increase, while the average  queue length  also increase with higher values of $V$. Next, we verify our findings with simulation results.

\section{Numerical Results}
We consider an IEEE 802.11 ax based WLAN where there is an AP serving $M =50$
STAs. We assume that all TWT sessions are used for uplink transmissions and choose the  parameters given in the following table.

\begin{table}[h]
\begin{center}
	\begin{tabular}{ | l |  p{5cm} |}
		\hline
	\textbf{	Simulation Parameters} & \textbf{Values}   \\ \hline
		T & 1 second, 2 seconds  
		\\ \hline
		$P_u$ & 1 W \\ \hline
		$P_{sleep}$ & 0.15 W \\ \hline
		$t_{up}$ & 1 ms \\ \hline
		$t_{d}$ & 0 \\ \hline
	    $t_{u}$ & 1 \\ \hline
		$R$ & [10, 20, 50, 100, 150, 200] Mbps \\ \hline
		$M$ & 50 STAs \\ \hline
		$K$ & 5 STAs\\ \hline
	    $V$ & [1000,  5000]\\ \hline
	    $\lambda$ & Possion traffic. Arrival period [5 4.5 4 3.5 3 2.5 2 1.5 1 0.5] seconds \\ \hline
	    File size &  25 KBytes\\ \hline
	    TWT values & Total 9 non-overlapping  TWT intervals. Minimum and maximum TWT intervals: 50 ms and 450 ms\\ 
		\hline
	\end{tabular}
\caption{Simulation parameters and values}
\end{center}
\end{table}
The network operates as follows: at the beginning of each $T$ seconds, the queue sizes and transmission rates for each STA is determined, and then the network assigns TWT intervals to STAs according to JTWSA. At most $K=5$ STAs can be assigned to same TWT interval thanks to the multi-user transmission capability of 802.11ax. As there are  $L=9$ different TWT intervals and $K=5$, at most 45 STAs can wake up for transmission over the epoch and other 5 STAs go to sleep mode during entire $T$ seconds. After assigning TWT intervals, STAs wake up at the determined intervals and transmit their data to AP. We compare JTWSA with a benchmark algorithm that assigns TWT intervals to STAs randomly at the beginning of every $T$ seconds. Next, we show the performance of JTWSA and the benchmark algorithm in terms of average queue sizes and energy consumption.

\begin{figure}
	\begin{subfigure}[b]{0.52\textwidth}
		\includegraphics[width=\textwidth]{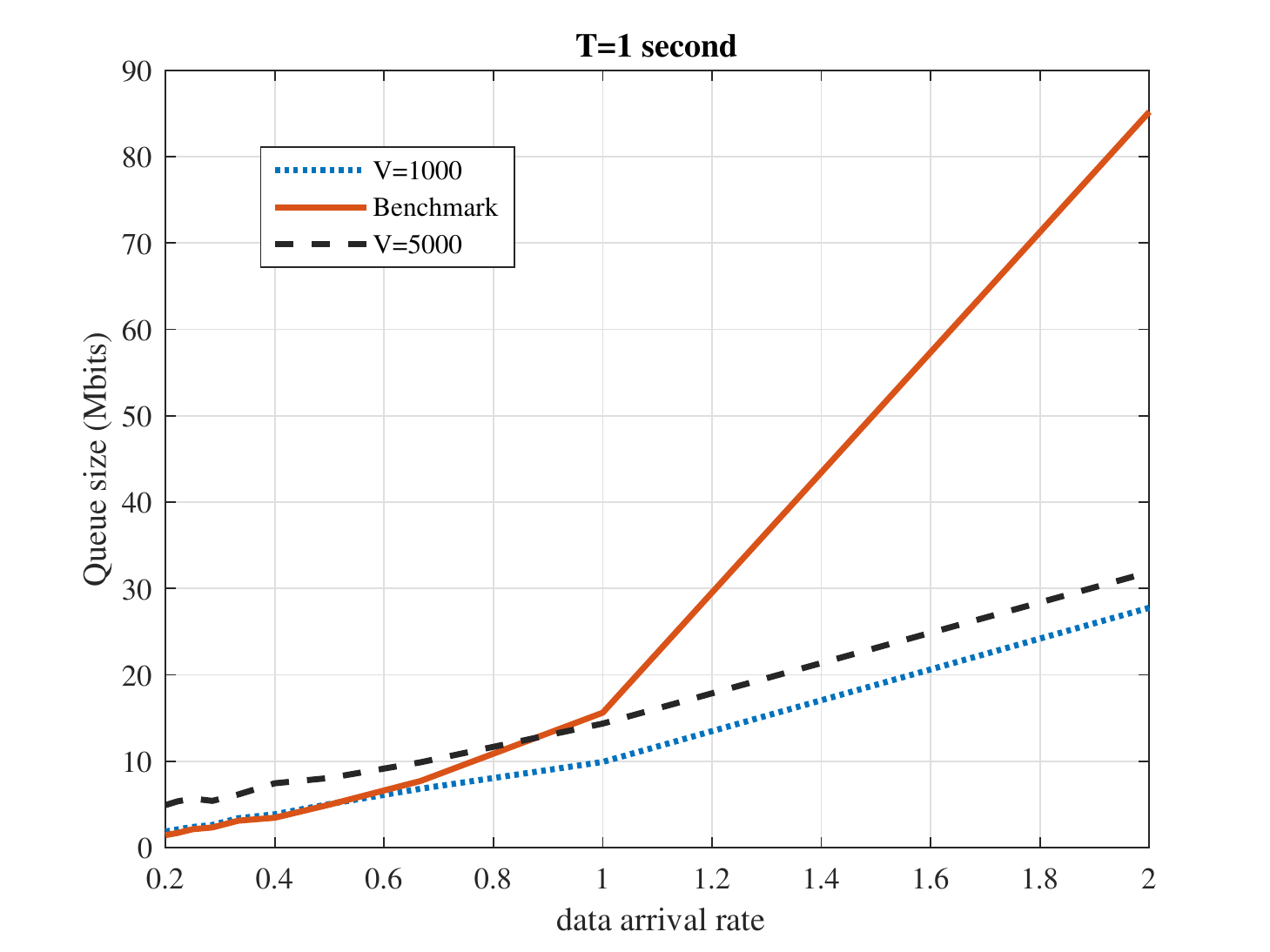}
		\caption{Avg. queue size with different  $\lambda$}
		\label{fig:Fig2}
	\end{subfigure}
	\begin{subfigure}[b]{0.52\textwidth}
		\includegraphics[width=\textwidth]{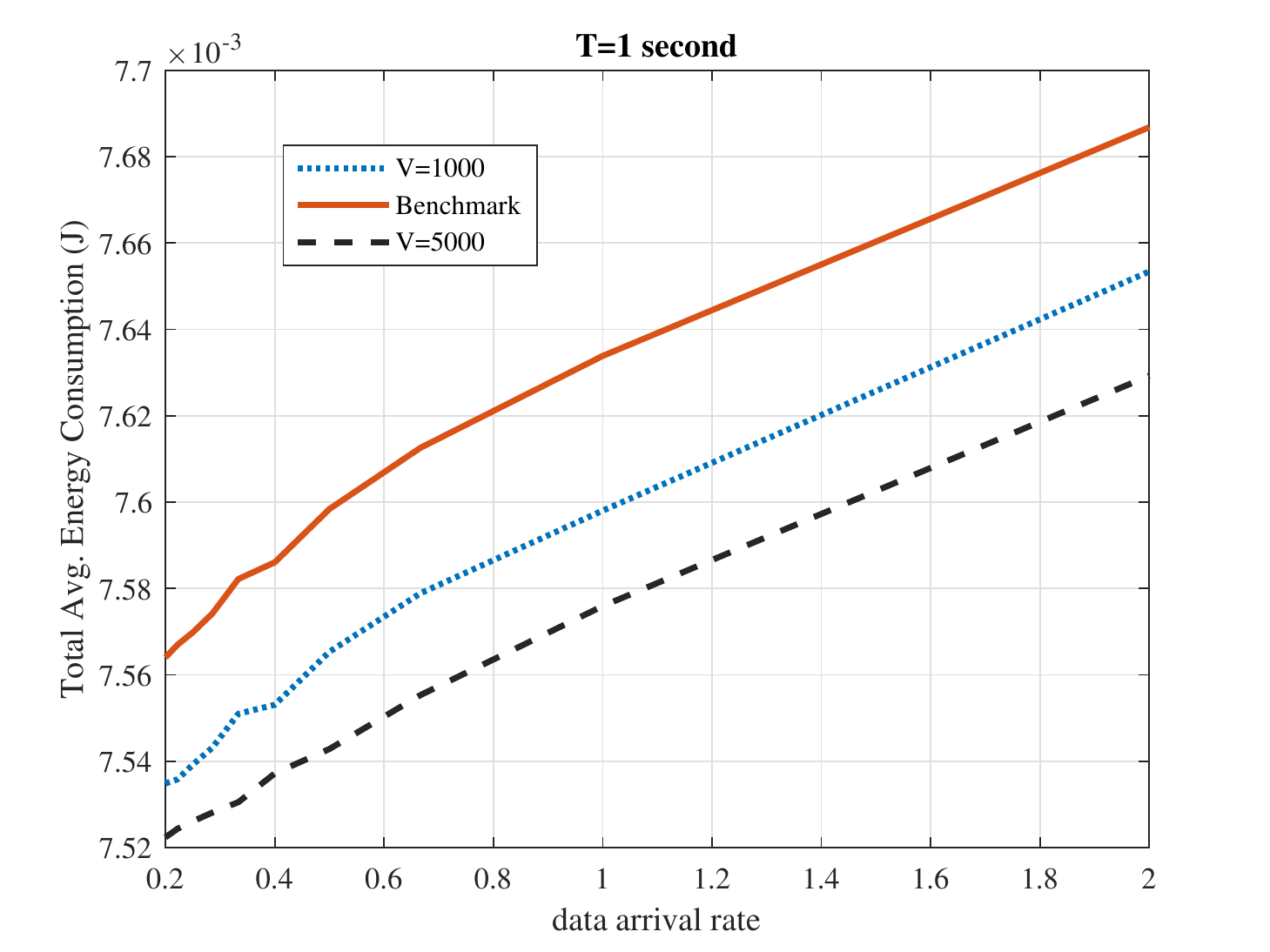}
		\caption{Avg. energy consumption with $\lambda$}
		\label{fig:Fig3}
	\end{subfigure}
\caption{Queue sized and energy consumption when T=1}\label{fig:2}
\end{figure}

Figure~\ref{fig:Fig2} and~\ref{fig:Fig3}  depict the total average queue size and energy consumption under JTWSA and the benchmark algorithm when $T$ is 1 second, respectively. From~\ref{fig:Fig2}, we see that as the arrival rate increases the total average queue size increases as well. However, there is a sudden increase in the queue size with the benchmark algorithm when the rate is around 1, which indicates that the network is not stable any more and the benchmark cannot support the arrival rate higher than 1. On the other hand, JTWSA can keep stabilizing  the network up to the arrival rate of 2. Recall that  Theorem 1 points that when the network operate with higher $V$ values,  a larger queue size is resulted in. Figure~\ref{fig:Fig2}  verifies that  the queue size with $V=5000$ is higher than the one occurred when $V=1000$.

Figure~\ref{fig:Fig3} shows the total average energy consumption of  JTWSA and the benchmark algorithm. Note that STAs can wake up even their queue size and data rate are low with the benchmark algorithm and hence the energy conservation is expected to be lower. This intuition is  demonstrated in Figure~\ref{fig:Fig3} where the energy consumption with the benchmark is higher than JTWSA. Also, as Theorem 1 suggests that as we increase $V$, we get closer to the global minimum energy consumption. Clearly, from Figure~\ref{fig:Fig3}, a higher $V$ value yields lower energy consumption for  JTWSA. Therefore, by choosing the control parameter $V$, a trade-off between queue delay and energy consumption can be achieved.
\begin{figure}
	\begin{subfigure}[b]{0.52\textwidth}
		\includegraphics[width=\textwidth]{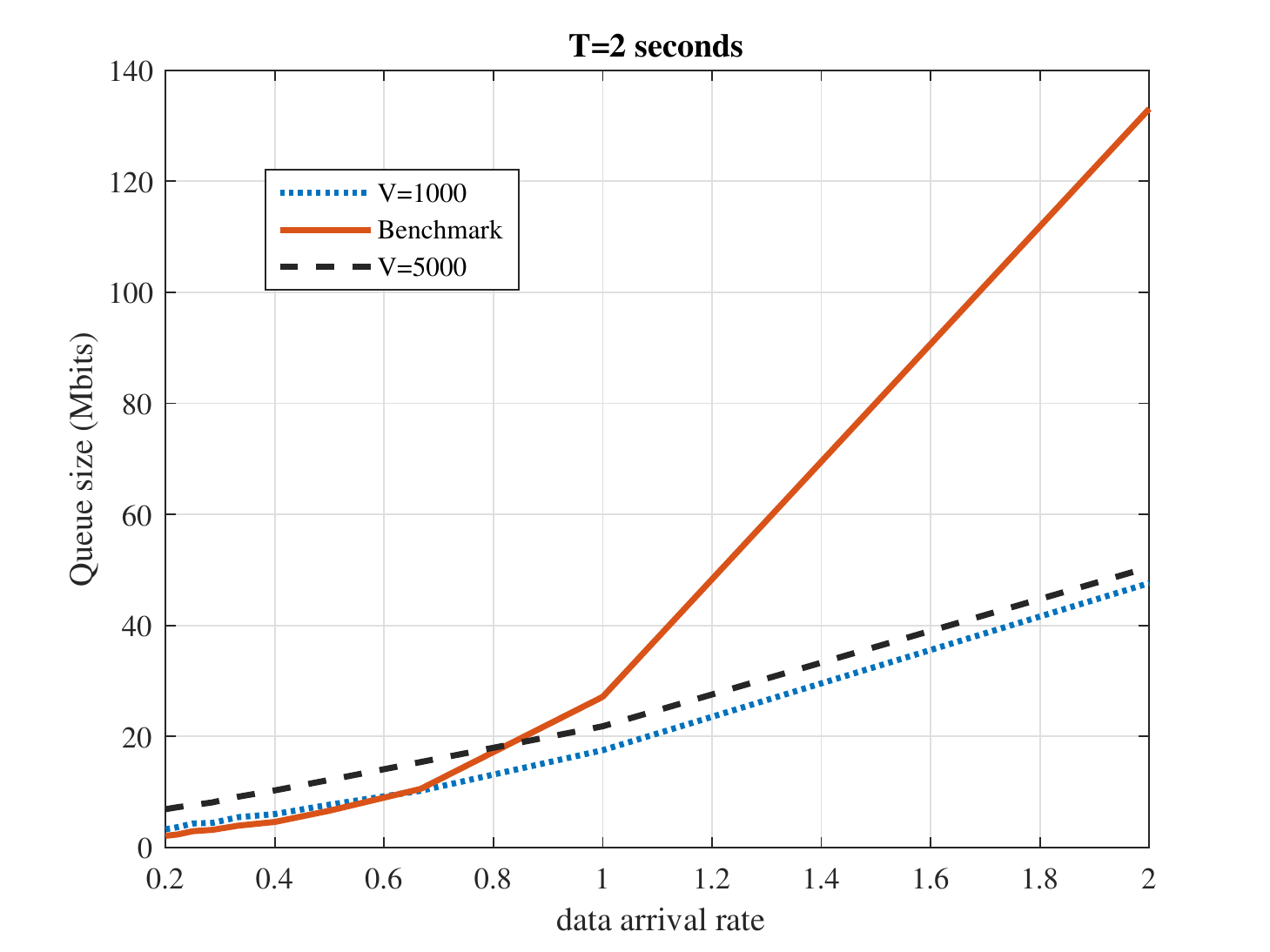}
		
		\caption{Avg. queue size with different  $\lambda$}
		\label{fig:Fig4}
	\end{subfigure}
	\begin{subfigure}[b]{0.52\textwidth}
		
		\includegraphics[width=\textwidth]{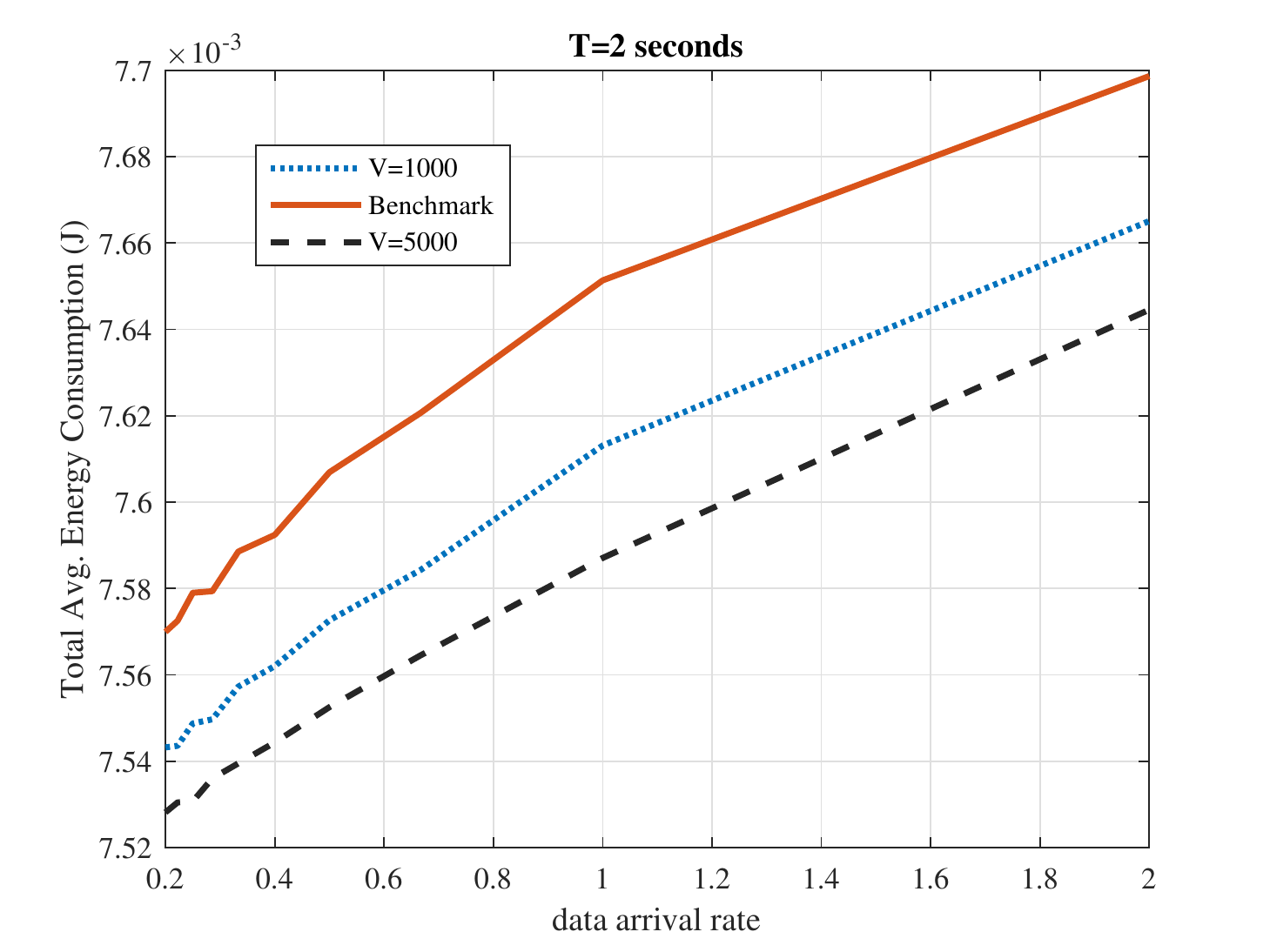}
		\caption{Avg. energy consumption with $\lambda$}
		\label{fig:Fig5}
	\end{subfigure}
	\caption{Queue sized and energy consumption when T=2}\label{fig:3}
\end{figure}

Lastly, Figure~\ref{fig:Fig4} and~\ref{fig:Fig5} depict the impact of a higher value of $T$ on the queue size and energy consumption. Specifically, we set $T=2$ seconds and according Theorem 1, both the queue size and energy consumption increase as we increase $T$. By comparing Figure~\ref{fig:Fig2} and Figure~\ref{fig:Fig4}, one can conclude that the queue size is larger with higher values of $T$. Similar result can be deduced for the energy consumption from Figure~\ref{fig:Fig3} and~\ref{fig:Fig5}. In order to close the minimum energy consumption when $T$ is high, we also need a higher $V$ values. However, the impact of a higher $T$ value on the energy consumption  is less significant but the impact on the queue size cannot be ignored.  We note although choosing lower $T$ values seem to be better in terms of queue size and energy, it will increase signaling overhead since the AP must communicate and agree on TWT interval at every $T$ second. This trade-off can be resolved given the operator preferences for a target objective and performance.

\section{Conclusion}
In this paper, we have developed an algorithm that aims at improving the energy efficiency of the power saving mechanism of IEEE 802.11ax  by optimization TWT intervals of STAs according to their traffic and channel conditions. Our algorithm assigns lower TWT interval when queue size is large  and/or transmission rate is high, and allows more sleeping time if traffic is low enough.  We have analyzed the algorithm via Lyapunov optimization framework and found performance bounds on the energy consumption and queue sizes. We have defined two important trade-off: first,  high energy saving can be achieved when a large average queue delay is allowed. Also, lower average delay can be obtained if TWT intervals are updated more frequently but the cost in this case is high signaling overhead. The impact of inaccurate transmission rates prediction on the performance of our algorithm is worth to be investigated. Also, in a multi-WLAN scenario where interference is significant, determining TWT intervals and channel allocation would be an interesting research problem.

\end{document}